\begin{document}
\title{Payment Network Design with Fees}
%
%\titlerunning{Abbreviated paper title}
% If the paper title is too long for the running head, you can set
% an abbreviated paper title here
%
\author{Georgia Avarikioti%\orcidID{0000-0001-5255-8389}
\and Gerrit Janssen \and
Yuyi Wang \and
Roger Wattenhofer
%\orcidID{1111-2222-3333-4444}
}
\authorrunning{G. Avarikioti, G. Janssen, Y. Wang, R. Wattenhofer.}
% First names are abbreviated in the running head.
% If there are more than two authors, 'et al.' is used.
%
\institute{ETH Zurich, Switzerland\\
\email{\{zetavar,gjanssen,yuwang,wattenhofer\}@ethz.ch}\\}
\maketitle              % typeset the header of the contribution
\begin{abstract}
Payment channels are the most prominent solution to the blockchain scalability problem. We introduce the problem of network design with fees for payment channels from the perspective of a Payment Service Provider (PSP). 
Given a set of transactions, we examine the optimal graph structure and fee assignment to maximize the PSP's profit. A customer prefers to route transactions through the PSP's network if the cheapest path from sender to receiver is financially interesting, i.e., if the path costs less than the blockchain fee. 
When the graph structure is a tree, and the PSP facilitates all transactions, the problem can be formulated as a linear program. 
For a path graph, we present a polynomial time algorithm to assign optimal fees. We also show that the star network, where the center is an additional node acting as an intermediary, is a near-optimal solution to the network design problem. 
%This implies the payment hubs, in practice, achieve near-optimal profit. 
\keywords{blockchain \and layer 2  \and channels \and lightning protocol}
\end{abstract}
\section{Introduction}
Scaling the transaction throughput  on blockchain systems, such as Bitcoin \cite{nakamoto2008bitcoin} and Ethereum \cite{ethereum}, is a fundamental problem and  an active research direction \cite{croman2016scaling}. Many solutions have been proposed, in particular sharding \cite{kokoris2017omniledger,luu2016sharding}, sidechains \cite{back2014sidechains} and channels \cite{DW2015channels,poon2015lightning,raiden2017,lind2016teechan}. Channels seem to be the most promising solution since they allow transactions to occur securely off-chain, and use the blockchain only for resolving disputes. 
%(Although channel networks have been implemented for bitcoin and ethereum, they seem to face practical difficulties due two inherent problems: the requirement for customers to lock their capital on the channels and the need to watch the blockchain transactions to avoid fraud. )
%
%Various current work has mainly focused on routing, i.e. how are transactions routed in a payment network \cite{prihodko2016flare,malavolta2017silentwhispers,roos2018routing}, a non-trivial task because of the decentralized nature of payment networks. \nb{roger: i'm not convinced that this is the right way to present this. i would rather skip this previous paragraph, and move it to related work. but i don't quite know how to improve it, practically speaking.}

We study the problem from the viewpoint of a Payment Service Provider (PSP). The PSP wants to establish an alternative payment network for customers to execute transactions. We assume  a PSP can open a channel between two parties without acting as an intermediate node; this can be done using three-party channels. The two parties and the PSP join a three-party channel funded only by the PSP who then loans money to the other parties. We assume that the PSP will eventually get his money back in fiat currency as he provides a service similar to credit cards (the risk lies to the PSP). Furthermore, the PSP signs each new state if and only if the fees have the correct value. This way he enforces the fee assignment on the channels.

Initially, a PSP will compete with the blockchain: customers only prefer the alternative network if the total fees cost less than the blockchain. 
We introduce the network design problem for the PSP, whose goal is to decide the graph structure and the fee assignments in order to maximize its profit.

Our contributions are as follows.
First, we provide a linear program formulation for the problem on trees when the PSP wants to facilitate all transactions, proving that this problem variation is in the complexity class P. Then, we show that the optimal fee assignment for any path has only 0/1 values on the fees, assuming $1$ is the cost of posting a transaction on the blockchain, and we present an efficient dynamic programming algorithm to compute the optimal fees. In addition, we prove that the star network is a near-optimal solution of the general network design problem, when we allow an additional node to be added as a payment hub and assume the optimal network is connected. This implies that a PSP can achieve almost maximum profit by creating a payment hub, the construction of which has already been studied in \cite{green2017bolt,heilman2017tumblebit}.
%
%In this work, we assume a PSP can open a channel between two parties without acting as an intermediate node; this can be done using three-party channels. The two parties and the PSP join a three-party channel funded only by the PSP who then signs the money to the other parties. The PSP also signs blank contracts with increasing number sequence with fees increasing accordingly. But since the star network is a near-optimal solution for the network design problem, in practice we only care about the construction of payment hubs, which has already been studied in \cite{green2017bolt,heilman2017tumblebit}.\nb{zeta: now?is it better?} 
%The construction of payment hubs that allow customers to make fast, anonymous, off-chain payments through an untrusted intermediary has already been studied in \cite{green2017bolt,heilman2017tumblebit}. In this work, we assume a PSP can open a channel between two parties without acting as an intermediate node; this can be done using three-party channels. The two parties and the PSP join a three-party channel funded only by the PSP who then signs the money to the other parties. The PSP also signs blank contracts with increasing number sequence with fees increasing accordingly.
%
%-------------------------------------------------------
% Problem definition: problem description, formal model
%-------------------------------------------------------
%
\section{Preliminaries and Notation}
In this section, we define the \emph{Channels Network Design with Fees (CNDF)} problem. We assume the PSP can renew the channels and change the network structure in specific epochs to avoid timing attacks; hence, we only consider a limited set of transactions corresponding to an epoch. Now, given a set of transactions between a fixed number of participants, we wish to create a payment network and assign fees to its channels to maximize the profit for the PSP. To formally define the problem, we introduce the following notation. 
%
%formal model:
%\vspace{-0.3cm}
%\paragraph{Network.}

We define a channel network as a graph $G = (V, E)$ with a set of vertices $V$ and a set of edges $E$. Each node $v \in V$ denotes one of $n$ participants wishing to use our network, hence $|V| = n$.
An edge $e \in E$ between two nodes $i$ and $j$ represents an open channel $C_{ij}$, with $|E| = m$. Thus, the set of edges $E$ represents the open channels of our network. 
For simplicity, we assume that all edges of the graph are undirected, as we assume the capacity of every channel to be infinite, in other words, the PSP has deep pockets and is able to fund channels with a significant amount of capital. Further, we define the cost of each edge in the network to be 1. This represents the cost of opening a channel by submitting a funding transaction to the blockchain as described in \cite{DW2015channels,poon2015lightning}.

%\vspace{-0.3cm}
%\paragraph{Transactions.}
Given a sequence of transactions for $n$ participants, we can define a transaction matrix $T \in \mathbb{N}^{n \times n}$. An entry $T[i,j]$ denotes the number of transactions from  $i$ to $j$ and back. Note that matrix $T$ is symmetric since the transactions' direction do not matter. If there are no transactions for a pair $(i,j)$ of nodes, then the corresponding matrix entry is $0$. Transactions where  sender and  receiver are identical are meaningless, thus the diagonal entries of the  matrix are $0$.
%\vspace{-0.3cm}
%\paragraph{Fees.}

For each edge $e \in E$ we can assign a fee $f_e \in \mathbb{R}$. We require every fee to be non-negative. 
%This is not necessary since some edges could have negative fee and the total fees from routing all transactions could be maximized\nb{show an example: eg given graph a-b-c-d on a path and transactions a-b, c-d, a-d. this has optimal fee assignment 1, -1, 1, to edges ab, bc, cd}. However, we examine this case at first\nb{there is some reason we should write down: If we allow negative fees, then people may start to make a lot of transactions using this edge.}
Moreover, we require the fees to be at most $1$, which is the cost of any transaction on the blockchain. Allowing the fee on an edge to be more than $1$ is equivalent to deleting this edge from the network, since the customers will always prefer to use the blockchain where the transaction fee is $1$. We denote by $f_E$ a fee assignment for the set of edges $E$.

%\vspace{-0.3cm}
%\paragraph{The profit function.}

To measure the value of a network we introduce a profit function. The profit of a payment network depends on the structure of the underlying graph, the fee assignments and the transactions carried out between participants in the network. 
Given a transaction matrix $T$, we define the profit of a graph $G = (V, E)$ as follows:
%\vspace{-0.2cm}
\begin{align*}
p(G, T, f_E)& =  -m + \sum_{i,j \in V} \sum_{e \in path(i,j)} f_e \cdot X_{ij} \cdot T[i,j], \\[1em] %roger: i replaced the number of edges with m.
\textrm{where } X_{ij}& =
\begin{cases} 
1, & \textrm{if the participant chooses to use the network,} \\
  & \textrm{~~i.e. } \sum_{e \in path(i,j)} f_e \leq 1 \\
0, & \textrm{otherwise} 
\end{cases}
\end{align*}

%\vspace{-0.2cm}
where $path(i,j)$ denotes the set of edges of the shortest path (cheapest sum of fees on edges)  from sender $i$ to receiver $j$ in the graph $G$.
%The double sum represents the profit generated by all transactions in the network. We consider every possible pair of nodes $(i,j)$ and sum all the fees on the shortest (cheapest) path from $i$ to $j$ in the network. The indicator function $X_{ij}$ calculates the sum of the fees on the shortest path from $i$ to $j$ and checks whether it is greater than $1$. 
%If so, $X_{ij}$ evaluates to 0 and the participant decides not to use the micropayment channel network, because directly carrying out the transaction on the blockchain is cheaper. On the other hand, if the sum of fees is at most 1, the participant uses the network and we multiply the computed total fee of the path with the number of transactions between node $i$ and $j$. This represents the profit that is generated by all transactions for this node pair.
%To calculate the total profit for the PSP, we subtract the cost of creating the micropayment channel from the profit generated by the transactions. This is equivalent to subtracting the number of edges $E$ of the network, since each funding transaction that opens a channel costs $1$.

%For each transaction pair we calculate the sum of the fees of the shortest (cheapest) path from sender to receiver. 

We include a pair of nodes $(i,j)$ in the profit calculation only if this sum of fees on the shortest path is at most $1$. Finally, we subtract from the profit the number of edges $m$, since each transaction that opens a channel costs $1$ in the blockchain.

%\paragraph{Channels Network Design with Fees Problem.}
Now, we formally define the problem as follows.
\begin{definition}(CNDF) \label{def: CNDF}
Given a transaction matrix $T \in \mathbb{N}^{n \times n}$, return a graph $G=(V,E)$ with $|V|=n$, and fee assignments on edges $f_E$, such that the profit function $p(G, T, f_E)$ is maximized. 
\end{definition}
In the following two sections %\ref{sec:trees} and \ref{sec:paths} 
we study a relaxed version of the problem, where the network structure is given. Our goal is to calculate the optimal fee assignments.
Specifically, in section \ref{sec:trees} we examine trees; trees are very natural as they connect a set of nodes with a minimal number of edges, and opening each edge costs a blockchain transaction. In addition we want all customers to prefer the PSP network, thus all paths in a given tree must cost less than $1$.
%
%Note that the optimal solution is not always a tree. If we consider three nodes with many transactions between every pair, the optimal payment network is the triangle with a fee of $1$ on each edge. A tree will connect the three nodes with a two-edge path, and since every transaction must be satisfied, the fee of each edge must be at most $1/2$, which does not maximize the profit. \nb{zeta: you say "since every transaction must be satisfied". this is not true, we just assume it in section 3. in the general case which is the optimal it is not true.}
%-------------------------------------------------------------------------------------
%Linear programming solution: refined problem definition, linear programming solution
%-------------------------------------------------------------------------------------
%
%
\section{A Linear Program for Trees}\label{sec:trees}

In this section, we find a solution to CNDF restricted to trees. We assume that every transaction makes sense in the tree, i.e., the sum of the fees on the path of every transaction is at most $1$. Therefore, by the model stated above, every user of the payment network will always use the network, and no transaction goes directly on the blockchain. 
%However, this approach results by no means in an optimal profit for every pair of a network and transaction set when considering the general problem.
It turns out this problem can be solved efficiently, as stated by the following result. 
\begin{theorem}
Given any tree and any transaction matrix, there exists a polynomial time algorithm to optimize the profit if every transaction can connect using the payment network.  
\end{theorem}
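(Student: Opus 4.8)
The plan is to reformulate the profit maximization as a linear program over the fee variables $\{f_e\}_{e \in E}$. The crucial structural fact I would exploit first is that in a tree the path between any pair of nodes is unique, so $path(i,j)$ does not depend on the fee assignment; unlike in a general graph, there is no combinatorial ``cheapest path'' choice to make, and the set of edges carrying each transaction is fixed in advance. Combined with the hypothesis that every transaction connects through the network (so $X_{ij} = 1$ for all pairs) and the fact that a tree on $n$ nodes has exactly $m = n-1$ edges, the term $-m$ becomes a constant and the profit reduces to a fixed term plus a linear function of the fees.

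Concretely, for each edge $e$ I would define its total load $w_e = \sum_{(i,j):\, e \in path(i,j)} T[i,j]$, the aggregate transaction volume crossing $e$; this is computable in polynomial time, for instance by observing that deleting $e$ splits the tree into two components $A$ and $B$ and setting $w_e = \sum_{i \in A,\, j \in B} T[i,j]$. Swapping the order of summation in the profit function then rewrites it as
\[
p(G,T,f_E) = -(n-1) + \sum_{e \in E} w_e \, f_e,
\]
which is linear in the decision variables.

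Next I would record the constraints, all of which are linear. The fee bounds give $0 \le f_e \le 1$ for every edge, and the requirement that each transaction remains financially interesting gives, for every pair $(i,j)$ with $T[i,j] > 0$, the path constraint $\sum_{e \in path(i,j)} f_e \le 1$. There are $n-1$ variables, at most $\binom{n}{2}$ path constraints, and $2(n-1)$ box constraints, so the program has polynomial size. Since linear programs are solvable in polynomial time, maximizing $\sum_{e} w_e f_e$ subject to these constraints yields the optimal fee assignment, and the theorem follows.

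I expect the main subtlety to lie in the first step, namely arguing that the objective is genuinely linear. This hinges entirely on the path being independent of the fees, which is special to trees: for a general graph the cheapest path can change as the fees vary, making the objective only piecewise linear and invalidating the reduction. The remaining steps, computing the loads $w_e$ and verifying that the resulting linear program has polynomial size, are routine.
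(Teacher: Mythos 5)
Your proposal is correct and follows essentially the same route as the paper: both express the profit as a linear objective $\sum_e w_e f_e$ (the paper calls the loads $c_i$) and impose one linear constraint $\sum_{e \in path(i,j)} f_e \le 1$ per transaction, then invoke polynomial-time LP solvability. Your explicit observation that tree paths are unique and hence independent of the fees is the (implicit) reason the paper's reduction works, so if anything your write-up is slightly more careful on that point.
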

\begin{proof}
To solve this variation of the problem, we can use linear programming to find the optimal profit along with an optimal assignment of fees.  
In order to do so for some given tree $G = (V, E)$ and a given transaction matrix $T$, we need to first determine the objective function that we want to maximize. Moreover, we need to specify suitable inequality constraints.

We compute the objective function by analyzing how many times each transaction uses each edge in the network. This gives us an objective function 
\begin{equation*}
f(x) = \sum_{i = 0}^{m-1} c_i \cdot x_i.
\end{equation*}
The argument of the objective function, a vector $x = (x_0, \cdots , x_{m-1})$ with $m = |E|$ components represents the fees of the edges that we wish to maximize, and $c_i$ denotes the number of times the edge $i$ is used by transactions. Then, to determine the inequality constraints which are imposed by the constraint that each transaction must have a total fee of at most $1$, we define one inequality for every transaction $t$:
\begin{equation*}
\sum_{i = 0}^{m-1} e_i \cdot x_i \leq 1,
\end{equation*}
where $e_i = 1$ if edge $i$ was used for transaction $t$, and 0 otherwise. Solving this linear program (in polynomial time) finds the optimal vector $x$ of fees.
%
% By solving the following linear program, we can obtain the optimal profit as well as the optimal fee assignments, represented by the solution for the vector $x$.
% \begin{align*}
% \max & \quad f(x)\\
% \text{s.t. }& \forall t: \sum_{i = 0}^{m-1} e_i \cdot x_i \leq 1 \\
% & \forall i \in [0,m]: x_i \ge 0
% \end{align*}
\hfill \qed
\end{proof}
In the following section, we remove the additional assumption that all the transactions should be facilitated by the PSP's network. The problem, now, is more complicated since the selection of transactions cannot be expressed as a linear program (but only as an ILP). Thus, we study the problem in more restricted graph structure: paths.%\nb{zeta: should i add this here or in the beginning of section 4?}
%
%---------------------------------------------------------------------------------------------------------------------
% Dynamic program for paths: algorithm in pseudocode, correctness proofs (0/1 lemma and optimality proof), complexity
%---------------------------------------------------------------------------------------------------------------------
%
%
\section{Dynamic Program for Paths}\label{sec:paths}

In this section we present \emph{Algorithm \ref{Alg:dynamic}}, a polynomial-time dynamic program that achieves optimal profit in chain networks. We prove that the optimal solution has only fees that are either $0$ or $1$.
\RestyleAlgo{ruled}
\LinesNumbered
\begin{algorithm}[!t]
\caption{Dynamic Program for Paths}\label{Alg:dynamic}
\SetStartEndCondition{ }{}{}
\SetKwProg{Fn}{def}{\string:}{}
\SetKw{KwTo}{to}\SetKwFor{For}{for}{\string:}{}
\SetKwIF{If}{ElseIf}{Else}{if}{:}{elif}{else:}{}
\SetKwComment{Comment}{}{}
\SetCommentSty{} 
\AlgoDontDisplayBlockMarkers\SetAlgoNoEnd\SetAlgoNoLine

\let\oldnl\nl% Store \nl in \oldnl
\newcommand{\nonl}{\renewcommand{\nl}{\let\nl\oldnl}}% Remove line number for one line

\BlankLine
\Comment{\textit{Initialization \dotfill}} \BlankLine

\nonl
$
\begin{aligned}
n &= \textrm{number of nodes}, \quad
m = \textrm{number of edges, i.e., } n-1 \\
%T &= \textit{transaction matrix} \\
%&\textit{Set all entries of $M[m,m,m]$ to $0$}\\ 
%&\textit{Set all entries of $P[m,m]$ to $0$}
\end{aligned}
$

\nonl Set all entries of $M[m,m,m]$ to $0$\\\setstretch{1}
\nonl Set all entries of $P[m,m]$ to $0$

\BlankLine\setstretch{1.2}
\Comment{\textit{Compute tensor M \dotfill}} 
  
%\For{$i=1$ \KwTo $m$}{ 
%    \For{$j=i$ \KwTo $m$}{ 
%    	\For{$k=j$ \KwTo $m$}{ 
		\For{every $1 \leq i \leq j \leq k \leq m$}{
    				
    		$p = 0$
    				
%    		\For{$u=1$ \KwTo $n$}{ 
            
%            	\For{$v=u+1$ \KwTo $n$}{
				\For{every entry T[u,v] in T}{
                	
%                     \If{T[u,v] != 0}{
                    
%                         \If{path of the transaction(s) from u to v includes edge j \textbf{and} only 									edges in [i,k] are used}{
						\If{$u \le j <  v$}{
                            $p = p + T[u,v]$
						}
                        }
%                         }
%                 	}
%            	}
%    		} 
    				
   			$M[i,j,k] = p$
		}    				
%    	}
% 	}
%}
        
\BlankLine
\Comment{\textit{Compute the dynamic programming table \dotfill}} 

%\For{$x=1$ \KwTo $m$}{ 
%	\For{$y=x$ \KwTo $m$}{ 
	\For{every $1 \leq x \leq y \leq m$}{      	
		$P[x,y] = M[1,x,y]$ 
               
    	\For{$lastX=1$ \KwTo $x-1$}{ 
                	
        	\If{P[lastX,x-1] + M[lastX+1,x,y] \textgreater P[x,y]}{ 
                    	
            	$P[x,y] = P[lastX,x-1] + M[lastX+1,x,y]$ \\\setstretch{1}
            	Store edges with a fee of 1, i.e., $x$ and edges that have a 
                fee of $1$ for $P[lastX,x-1]$

           	}                
		}
     }
%	}
%}            
       
\BlankLine\setstretch{1}
profit = maximum entry in $P$ \\
fee assignment = edges with a fee of $1$ stored for the maximum table entry
\end{algorithm}
%
%
%First, we compute a matrix $M$ with dimensions $m \times m \times m$, where $m$ is the number of edges in the path. For $i = 0 \textit{ to } m-1$, $j = i \textit{ to } m-1$ and $k = j \textit{ to } m-1$, we set $ M[i,j,k]$ to the profit that can be achieved by setting the fee of edge $j$ to 1 and allowing only transactions that use edges in the interval $[i,k]$ (line 9).
%
%Next, we compute the actual dynamic programming table $P$ with dimensions $m \times m$ by setting entry $P[x,y] = M[0,x,y]$ while iterating over all possible $x$ and $y$ (line 12). Then, for $x = 0 \textit{ to } m-1$ and $y = x \textit{ to } m-1$, the algorithm sets $P[x,y] = \max\{P[x,y],P[lastX,x-1] + M[lastX+1,x,y]\}$, where $lastX = 0 \textit{ to } x-1$ (lines 13-15). Finally, we choose the maximum value in $P$, which is the optimal profit (line 17).
%
First we compute tensor  $M$, where $ M[i,j,k]$ is the profit from all transactions in the interval $[i,k]$ when the fee of edge $j$ is $1$ and every other fee is $0$. Then, we compute matrix $P$, the maximum entry of which is the optimal profit.
$P[lastX,x-1]$ denotes the maximum profit when setting the fee of the edge with index $lastX$ to $1$ (it is possible that more edges have a fee of $1$ before that, but $lastX$ is the last edge where this is the case) and only using the edges up to $x-1$. 
$M[lastX+1, x, y]$ denotes the profit from edges in the interval $[lastX+1,y]$ while the $x$-th edge's fee is $1$.
For some fixed $x,y$ we iterate over all possible profits of the preceding part of the graph, add it to the profit of the corresponding current interval and only consider the maximal profit (if larger that  setting the $x$-th edge's fee to 1).
%
%$P[lastX,x-1]$ denotes the profit that we can achieve when setting the edge with index $lastX$ to 1 (it is possible that more edges have a fee of 1 before that, but $lastX$ is the last edge where this is the case) and only using the edges up to (and including) $x-1$. Secondly, $M[lastX+1, x, y]$ is the profit which we get when setting the edge with index $x$ to 1 and only using edges in the interval $[lastX+1,y]$. Thus, for some fixed $x,y$ we iterate over all possible profits of the preceding part of the graph, add it to the profit of the corresponding current interval and only consider the maximal profit. Additionally, this maximal profit also needs to be larger than the profit we could get by only setting edge x to 1 in the entire path or else we do not update $P[x,y]$. 

To retrieve a fee assignment that has optimal profit, we can do the following:
We define an additional matrix $E$, where the entries of each row are initialized with the number of the row. 
Now, every time we update an entry $P[x,y]$, we set $E[x,y] = [x, E[lastX,x-1]]$. These denote the edges that are assigned a fee of 1 to attain the calculated profit.
When the algorithm has ended, we read the entry $E[x',y']$, where $x',y'$ are the indices of the maximum value in $P$, and set the fee of the edge contained in $E[x',y']$ to 1 in the optimal fee assignment.
\paragraph{Correctness and Runtime.}
We prove the correctness of Algorithm \ref{Alg:dynamic} and analyze its time complexity. 
In Algorithm \ref{Alg:dynamic}, an edge is either assigned a fee of $1$ or $0$. The following lemma states that these are indeed the only two values we need to consider. 
\begin{lemma}\label{lemma:zero_one}
For every given path and for every set of transactions, the optimal profit can always be achieved by assigning edges a fee $0$ or $1$. 
\end{lemma}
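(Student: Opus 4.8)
The plan is to fix the set of transactions that actually use the network, reduce the resulting problem to a linear program, and then exploit the interval structure of paths to show that this LP always has an integral (hence $0/1$) optimum.

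First I would observe that, since the graph is a fixed path, the term $-m$ in $p(G,T,f_E)$ is a constant, so it suffices to maximize $\sum_{(u,v)} T[u,v]\cdot s_{uv}\cdot X_{uv}$, where $s_{uv}=\sum_{e\in path(u,v)}f_e$ and $X_{uv}$ is the indicator defined in the model. The crucial structural fact is that on a path $path(u,v)$ is a \emph{contiguous} block of edges: if the nodes are labelled $1,\dots,n$ along the path, then a transaction between $u<v$ uses exactly the edges $u,u+1,\dots,v-1$.

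Next I would start from an arbitrary optimal assignment $f^{*}$ and let $A$ be the set of transactions it serves, i.e.\ those with $s^{*}_{uv}\le 1$. Consider the LP that maximizes $\sum_{(u,v)\in A}T[u,v]\,s_{uv}$ over $f\in[0,1]^{m}$ subject \emph{only} to $s_{uv}\le 1$ for every $(u,v)\in A$. I would argue this LP captures the original optimum: $f^{*}$ is LP-feasible and its LP objective equals its true profit, so the LP optimum is at least $\mathrm{OPT}$; conversely, any LP-feasible $f$ serves at least all of $A$, and since every $T[u,v]\ge 0$, serving additional transactions can only raise the true profit, so the true profit of an LP optimizer is at least its LP objective and hence at least $\mathrm{OPT}$. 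Therefore an optimal \emph{vertex} of this polytope is also an optimal fee assignment for the original problem.

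The crux is then to show that the vertices of this polytope are $0/1$. Its constraint matrix consists of the box constraints $0\le f_e\le 1$ together with one row per transaction in $A$; by the contiguity observation, each such row has its nonzero entries in consecutive columns, so it is an interval (consecutive-ones) matrix, which is totally unimodular, and appending the $\pm$ identity rows of the box constraints preserves total unimodularity. Since the right-hand side is integral, every vertex of the polytope is integral, and with $0\le f_e\le 1$ this forces each $f_e\in\{0,1\}$. As the feasible region is bounded and nonempty, the LP attains its optimum at a vertex, yielding an optimal $0/1$ assignment.

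The main obstacle I anticipate lies in the second step: the objective is genuinely nonlinear because of the indicators $X_{uv}$, so one cannot write a single LP directly. The trick of freezing the served set $A$ of an optimal solution, dropping the constraints that would keep the unserved transactions unserved, and verifying (via $T\ge 0$) that this relaxation cannot decrease the true profit is what makes the linearization legitimate; some care is needed to confirm that no profit is lost when passing from $f^{*}$ to the integral LP optimizer. The total unimodularity of interval matrices is a standard fact that I would simply cite rather than reprove.
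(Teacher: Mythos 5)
Your proof is correct, and it shares the paper's overall skeleton -- fix an optimal assignment, freeze the set of transactions it serves, reduce to a linear program over the fees with one interval constraint per served transaction, and argue that this LP has a $0/1$ vertex optimum -- but the decisive step is handled by a genuinely different argument. Where you invoke the standard fact that a consecutive-ones (interval) constraint matrix is totally unimodular, so that with the box constraints appended and an integral right-hand side every vertex of the polytope is integral, the paper instead gives a self-contained induction on the path length: it restricts attention to the \emph{maximal} served intervals $S$ (which, by nonnegativity of the fees, generate the same polytope as your full set $A$), explicitly constructs a $0/1$ vector $f'$ with $\sum_{k=i}^{j} f'_k = 1$ on every interval of $S$, writes the given optimum as a convex combination $f = \theta f' + (1-\theta) f''$ with $\theta = \min\{f_k : f'_k=1\}$, and recurses on $f''$, which has a zero coordinate and hence lives on a shorter path. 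Your route is shorter and leans on a classical theorem; the paper's is elementary and avoids citing total unimodularity. You are also more careful than the paper about the linearization itself: the paper merely asserts that ``the optimal profit can be obtained by solving a linear program,'' whereas you justify why dropping the constraints on unserved transactions and re-optimizing cannot lose profit (feasibility of $f^{*}$ gives one inequality, and $T\ge 0$ together with the fact that an LP optimizer still serves all of $A$ gives the other). That justification is needed and is a genuine strengthening of the write-up.
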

%\begin{comment}
\begin{proof}
Assume that we are given some optimal fee assignment $f=(f_1,f_2,\ldots,f_m)$ on the path of length $m$, and but this assignment may use other values, not only $0$ or $1$. We show that only using $0$ and $1$ one also can reach the same (or even more) profit. 

Based on the given fee assignment $f$, we compute the set $S$ of all maximal intervals (i.e., there does not exist a pair of intervals $(i,j)$ and $(i',j')$ such that $i\le i'$ and $j\ge j'$) where the sum of the fees on the edges in that interval is less or equal to $1$. That is, an interval $(i,j)$ is in $S$ if and only if it satisfies that 
$\sum_{k=i}^j f_k \le 1$ and $\sum_{k=i-1}^j f_k >1 $ (or $i=1$) and $\sum_{k=i}^{j+1} f_k > 1$ (or $j=m$).
The optimal profit can be obtained by solving a linear program.  
% \vspace{-0.3cm}
% \begin{eqnarray/*}
% \max_{f}&& \qquad \sum_{(i,j):\exists (i',j')\in S \text{ such that } i'\le i \text{ and  } j\le j'} T[i,j] \cdot \sum_{k=i}^j f_k\\
% \text{s.t. }&& \qquad  \sum_{k=i}^j f_k \le 1\qquad \forall (i,j)\in S\\
% && \qquad  f_k \ge 0\qquad \forall k
% \end{eqnarray*}
% \vspace{-0.2cm}
It is well known that every linear program reaches its optimal at the vertex of the feasible region. 
Hence, we only need to show that every entry of every vertex of the feasible region defined above is either $0$ or $1$. 
Equivalently, we show that every feasible solution is a convex combination of vectors with only $0$ and $1$. 

We prove this by induction on the length of the path. 
For the base case, when the length is $1$, i.e., a single edge, it is trivial. 
Now assume that this result holds for paths of length smaller than $m$, and we prove that it also holds for length equals to $m$. 
The key observation is that, for any path, there always exists an assignment $f'$ with only $0$ and $1$ such that $\sum_{k=i}^j f'_k = 1$ for every $(i,j)\in S$, as follows:
\begin{enumerate}
\item Let $f'_k = 0$ for all $k$.
\item For $k$ from $1$ to $m$, consider all intervals $(i,j)$ in $S$ such that $i\le k\le j$. If all such intervals $(i,j)$ satisfy $\sum_{t=i}^j f'_t = 0$, then let $f'_k =1$. 
\end{enumerate}
We define $K:= \{k: f'_k = 1\}$ and let $\theta :=\min \{ f_k : k\in K \}$. Now we write $f$ as a convex combination $f = \theta \cdot f' + (1-\theta)\cdot f''.$ Since $\sum_{k=i}^j f'_k = 1$ for every $(i,j)\in S$, it follows that $\sum_{k=i}^j f''_k \le 1$ for every $(i,j)\in S$. By the definition of $\theta,$ we know that there exists at least one index $t$ such that $f''_t = 0$ ($f_t = \theta$). According to these two facts, $f''$ can be considered as a feasible solution for the path of length $n-1$, which by the induction hypothesis is also a convex combination of vectors with only $0$ and $1$. The lemma is proved. 
\hfill \qed
\end{proof}
%\end{comment}
%
%The proof of lemma \ref{lemma:zero_one} is omitted due to space limitations. The full version can be found in \url{https://github.com/zetavar/Payment-Network-Design-with-Fees/blob/master/Payment_Network_Design_with_Fees-Full_Version.pdf}.\\

The above lemma is useful in pruning search space, but it is still exponential ($2^m$) if we do a brute force search. Our dynamic programming method makes the search space polynomial in $m$, which is shown in the following theorem. 
%
%optimality:
\begin{theorem}\label{thm:correctness}
Algorithm \ref{Alg:dynamic} returns the optimal solution and the time complexity of the algorithm is $\mathcal{O}(n^5)$.
\end{theorem}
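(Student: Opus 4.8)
The plan is to prove the two claims, correctness and the $\mathcal{O}(n^5)$ bound, separately. For correctness I would begin from Lemma \ref{lemma:zero_one}, which lets me restrict attention to $0/1$ fee assignments; hence an optimal solution is fully determined by the subset of edges assigned fee $1$, say $j_1 < j_2 < \cdots < j_r$. The first step is to rewrite the objective in a form that matches the tensor $M$. With $0/1$ fees, a transaction $(u,v)$ (using edges $u, u+1, \ldots, v-1$) contributes $T[u,v]$ to the profit exactly when its edge interval contains precisely one fee-$1$ edge, and contributes $0$ otherwise: zero fee-$1$ edges gives fee sum $0$, while two or more gives fee sum $>1$ and the transaction is routed on-chain. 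Since the path is fixed, the constant $-m$ term is irrelevant to the maximization, so it suffices to maximize the total captured transaction weight.

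The heart of the argument is a charging decomposition: each captured transaction crosses a unique fee-$1$ edge $j_\ell$, and I would show that the weight charged to $j_\ell$, namely the transactions with $j_{\ell-1} < u \le j_\ell \le v-1 < j_{\ell+1}$, equals the entry $M[j_{\ell-1}+1,\, j_\ell,\, j_{\ell+1}-1]$. I then define $\mathrm{OPT}(x,y)$ as the best captured weight over all fee-$1$ subsets of $[1,y]$ whose largest element is $x$, counting only transactions confined to $[1,y]$, and prove $P[x,y]=\mathrm{OPT}(x,y)$ by induction. The inductive step mirrors the recurrence $P[x,y]=\max\big(M[1,x,y],\ \max_{1\le lastX\le x-1}(P[lastX,x-1]+M[lastX+1,x,y])\big)$: transactions that reach and cross edge $x$ with left endpoint $>lastX$ (so they avoid all earlier fee-$1$ edges) are counted by $M[lastX+1,x,y]$, whereas transactions confined to $[1,x-1]$ are counted by the subproblem $P[lastX,x-1]$. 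Finally, because all entries of $T$ are non-negative, $M[i,j,k]$ and hence $P[x,y]$ are monotone non-decreasing in the horizon, so the global maximum of $P$ is attained at $y=m$ and equals the overall optimum (the all-zero assignment yields $0$, which the table dominates since all entries are initialized to $0$).

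I expect the main obstacle to be making this partition fully rigorous in the inductive step: one must enumerate the positions of a transaction's endpoints relative to $lastX$, $x$, and the horizon $y$, and confirm in each case that it is counted in exactly the term consistent with how many fee-$1$ edges it crosses. The critical case is a transaction crossing \emph{both} $x$ and an earlier fee-$1$ edge: it must be charged to neither term (it contributes $0$), and one checks this because $M[lastX+1,x,y]$ excludes it (its left endpoint is $\le lastX$) while $P[lastX,x-1]$ excludes it (its right endpoint exceeds $x-1$). The boundary conventions ($j_0=0$, the leftmost edge handled by the base case $M[1,x,y]$, and transactions lying entirely to the right of $x$ contributing $0$) also need explicit treatment.

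For the runtime the analysis is a direct count of the nested loops. Building $M$ ranges over all triples $1 \le i \le j \le k \le m$, i.e. $\mathcal{O}(m^3)$ triples, and for each scans all $\mathcal{O}(n^2)$ transaction entries, giving $\mathcal{O}(m^3 n^2) = \mathcal{O}(n^5)$ since $m = n-1$. Computing the table $P$ iterates over $\mathcal{O}(m^2)$ pairs $(x,y)$ with an inner loop over $lastX$ of length $\mathcal{O}(m)$, i.e. $\mathcal{O}(m^3)=\mathcal{O}(n^3)$, which is dominated. Hence the total running time is $\mathcal{O}(n^5)$, and the fee-recovery bookkeeping through the matrix $E$ adds only constant overhead per update.
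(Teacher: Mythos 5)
Your proposal is correct and follows essentially the same route as the paper: it reduces to $0/1$ assignments via Lemma~\ref{lemma:zero_one}, establishes the same recurrence $P[x,y]=\max\bigl(M[1,x,y],\max_{lastX}(P[lastX,x-1]+M[lastX+1,x,y])\bigr)$ by charging each captured transaction to the unique fee-$1$ edge it crosses (and checking that transactions crossing both $x$ and an earlier fee-$1$ edge are excluded from both terms), and counts the loops to get $\mathcal{O}(m^3 n^2)=\mathcal{O}(n^5)$ for $M$ plus $\mathcal{O}(n^3)$ for $P$. If anything, your interval bookkeeping for $\mathrm{OPT}(x,y)$ and the boundary conventions are spelled out more explicitly than in the paper's own argument.
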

\begin{proof}
Let $OPT(x,y)$ denote the profit of the sub-path from edge 1 up to and including edge $y$ where we set the fee of edge $x$ ($x \leq y$) to 1. We claim that $OPT(x,y)$ fulfills the following recurrence:

\begin{equation*}
OPT(x,y) = \max
\begin{cases} 
M[1,x,y] & \textit{(Case 1)} \\
P[lastX,x-1] + M[lastX+1,x,y] & \textit{(Case 2)} 
\end{cases}
\end{equation*}

If $OPT(x,y)$ is equal to \textit{(Case 1)}, this means that we reach the maximum profit in the subgraph from edge $1$ to $y$ by only setting the fee of edge $x$ to $1$ in the entire subgraph. Consequently, every transaction, that only uses edges from this subgraph, can generate profit.

Otherwise, if $OPT(x,y)$ happens to be \textit{(Case 2)}, we know that there are at least two edges with a fee of $1$ in the subgraph from edge $1$ to $y$, namely on edge $x$ and on edge $lastX$. Therefore, profit is generated by transactions in the first part of the subgraph, i.e. from edge 1 to $x-1$, and at the same time in the second part, that is from $lastX+1$ to $y$. However, no transactions, which use edges in both parts of the subgraph, can generate profit, as such a transaction would then cross both edges with a fee of 1. 

Because of this, we can iterate over every possible sum of the profits of $P[lastX,x-1]$ and $M[lastX+1,x,y]$ and choose the maximum thereof. Note, that we do not necessarily choose the maximum for both terms, but instead pick the maximal sum or otherwise we might only obtain a locally optimal solution. This method can be used, since we were able to split the subgraph from 1 to $y$ in two parts as explained above. Moreover, we have already precomputed both terms: $M[lastX+1,x,y]$ was computed at the very beginning of the algorithm and $P[lastX,x-1]$ is always an entry of the table that was the result of a prior computation with exactly the same recurrence. 
	
The tensor $M$ can be computed in time $\mathcal{O}(n^5)$.
The computation of the table $P$ can be accomplished in time $\mathcal{O}(n^3)$, since we have $3$ loops that iterate over parts of the edge indices. 
Therefore, the complete algorithm can be implemented with runtime $\mathcal{O}(n^5)$.  \hfill \qed
\end{proof}	
%	
%
%---------------------------------------------------------
% Approximative solution: star (with added node), proofs?
%---------------------------------------------------------
%
\section{Payment Hub: a Near-Optimal Solution}
In this section, we present a near-optimal solution to the \emph{CNDF problem}. Please note that the optimal solution is not always a tree. For example, if we consider three nodes with many transactions between every pair, the optimal payment network is the triangle with a fee of $1$ on each edge. A tree will connect the three nodes with a two-edge path, hence none of the trees achieve maximum profit. 
We show that if the optimal network is connected, then the star graph, where the center is an additional node acting as a payment hub, is a near optimal solution.

We denote $opt(T)$ the profit, $G_{opt}$ the graph and $f_{E_{opt}}$ the fee assignment of the optimal solution for a given transaction matrix $T$. 
Moreover, we denote $S=(V_S, E_S)$ the star graph that includes all nodes $V$ and an additional one, $c$, as the center of the star. We assign uniform fees to all the edges, $f_e= 0.5, \forall e\in E_S$.
\begin{theorem}
If $G_{opt}$ is connected, then $p(S, T, f_{E_S}) \geq opt(G_{opt}, T, f_{E_{opt}}) - 1$.
\end{theorem}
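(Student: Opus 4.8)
The plan is to bound both profits against a single shared quantity, namely the total transaction volume $\sum_{i,j\in V} T[i,j]$, and then to reduce the comparison to the edge-count terms alone, which differ by at most one once we invoke connectivity.

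First I would evaluate $p(S,T,f_{E_S})$ exactly. In the star the unique path between any two original nodes $i$ and $j$ is $i\to c\to j$, traversing two edges each carrying fee $0.5$, so its total fee is exactly $0.5+0.5=1\le 1$. Hence every transaction is facilitated ($X_{ij}=1$) and contributes precisely $1\cdot T[i,j]$ to the fee term. Since the star has exactly $n$ edges (one per original node), this yields
\[
p(S,T,f_{E_S}) = -n + \sum_{i,j\in V} T[i,j].
\]

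Next I would upper-bound the optimal profit. For any graph and any admissible fee assignment, a facilitated pair $(i,j)$ satisfies $\sum_{e\in path(i,j)} f_e \le 1$, so it contributes at most $T[i,j]$ to the fee term, whereas an unfacilitated pair contributes $0\le T[i,j]$. Summing over all pairs shows the fee term never exceeds $\sum_{i,j\in V} T[i,j]$, and therefore
\[
opt(G_{opt},T,f_{E_{opt}}) \le -m_{opt} + \sum_{i,j\in V} T[i,j],
\]
where $m_{opt}=|E_{opt}|$. The structural input enters here: because $G_{opt}$ is connected on all $n$ nodes, it contains at least $n-1$ edges, i.e.\ $m_{opt}\ge n-1$. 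Subtracting the two estimates, the common volume term cancels and I obtain
\[
opt(G_{opt},T,f_{E_{opt}}) - p(S,T,f_{E_S}) \le n - m_{opt} \le n-(n-1) = 1,
\]
which rearranges to the claimed inequality $p(S,T,f_{E_S}) \geq opt(G_{opt}, T, f_{E_{opt}}) - 1$.

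I do not expect a genuine analytic obstacle; the argument is essentially accounting. The one point requiring care is the exactness of the star's profit: the choice $f_e=0.5$ is exactly what makes every two-hop path cost $1$, so the star simultaneously captures the \emph{maximum} achievable fee volume and facilitates \emph{all} transactions while paying only $n$ edges. The whole slack of $1$ is thus pinned down to the single extra edge that the hub construction may cost relative to a connected spanning structure on $n$ nodes, so the connectivity hypothesis (which guarantees $m_{opt}\ge n-1$) is doing the essential work and is where I would concentrate the justification.
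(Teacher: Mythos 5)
Your proof is correct and follows essentially the same route as the paper: evaluate the star's profit exactly (all pairs facilitated at total fee $1$ over $n$ edges), upper-bound the optimum by the total transaction volume minus $|E_{opt}|$, and use connectivity to get $|E_{opt}|\ge n-1$ so the edge counts differ by at most one. No gaps.
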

\begin{proof}
If $G_{opt}$ is one connected component, then $|E_{opt}| \geq n-1$. 
For the star graph $S$, we have $V_S=V+c$ and $|E_S|= n \leq E_{opt} + 1$. Furthermore, the sum of fees on all shortest path is equal to $1$ due to the uniform fees equal to $0.5$ and the star structure.
The profit function maximizes its value when all transactions go through the graph $G_{opt}$ with total fee equal to $1$, hence 
$$ opt(G_{opt}, T, f_{E_{opt}}) \leq \sum_{i,j \in V}  T[i,j] - |E_{opt}|  \leq \sum_{i,j \in V}  T[i,j] - |E_S| + 1 = p(S, T, f_{E_S}) +1$$
The last equality holds since the sum on every shortest path equals to $1$. \hfill \qed
\end{proof}
\emph{Discussion on network connectivity.} In a monetary system we expect some nodes to be highly connected, representing big companies that transact with many nodes on the network. These highly connected nodes assist in connecting the entire network in one big connected component, as we initially assumed. 
%
%--------------------------------------
% Related Work
%--------------------------------------
%
\section{Related Work}
The Lightning Network \cite{poon2015lightning} for Bitcoin \cite{nakamoto2008bitcoin} and the Raiden Network \cite{raiden2017} for Ethereum \cite{ethereum} are the most prominent implemented decentralized path-based transaction networks for payment channels, even though similar proposals existed earlier \cite{DW2015channels}. 

Recent work has mainly focused on designing routing algorithms for these networks. The goal of these algorithms is to efficiently find a route in the network that has enough capital to facilitate the current transaction.
Prihodko et al. introduced Flare \cite{prihodko2016flare}, a routing algorithm for the Lightning network. Flare can quickly discover routes but nodes  need to collect information on the Lightning network topology. 
The IOU credit network SilentWhispers \cite{malavolta2017silentwhispers} utitizes landmark routing to discover multiple paths and then performs multiparty computation to decide how many funds to send along each path.
A more recent work by Roos et al. \cite{roos2018routing} introduces SpeedyMurmurs, which uses embedding-based path discovery to find routes from sender to receiver.
In all these algorithms, the task is to find a route to facilitate a customer's transaction through the payment network. 
Routing very much is orthogonal to our goal of finding the right fees.

Heilman et al. \cite{heilman2017tumblebit} propose a Bitcoin-compatible construction of a payment hub for fast and anonymous off-chain transactions through an untrusted intermediary.
Green et al. present Bolt \cite{green2017bolt} (Blind Off-chain Lightweight Transactions) for constructing privacy-preserving unlinkable and fast payment channels. Both protocols focus on constructing anonymous and private systems that can act as payment hubs. We show that constructing a payment hub is a near optimal strategy with respect to a PSP's profit.
%Another relative problem is the general network design problem; Johnson et al. \cite{johnson1978networkdesign} prove that given a weighted undirected graph,  finding a subgraph that connects all the original vertices and minimizes the sum of the shortest path weights between all vertex pairs, subject to a budget constrain on the sum of its edge weights is NP-hard. 
%
%--------------------------------------
% Conclusion
%--------------------------------------
%
\section{Conclusion} 
To the best of our knowledge, we are the first to introduce a framework for network design with fees on payment channels. We present algorithms that calculate the optimal fee assignments given a path or a tree as a graph structure. 
%In the latter case, we also assume that all transactions go through the network. 
Furthermore, we prove the star is a near-optimal solution when we allow adding an extra node to act as an intermediary for the customers. This implies that the construction of payment hubs is an almost optimal strategy for a PSP. 

%An interesting question is whether the \emph{CNDF problem} is NP-hard. It seems that it is difficult to compute the optimal network and fee assignment even when you restrict your graph to a tree. However, a formal reduction for both the general and the tree case would be beneficial. 
%roger: let's not mention NP in the submission... it's just a too simple reason for rejection... same also for the other open question:
%An interesting direction is to consider the network design problem for the capital restrictions on the channels. Assuming the existence of a PSP, how would the optimal graph look like and how would the funds be locked on the edges of the graph to facilitate a given sequence of transactions? This problem seems more challenging, since the transaction order matters. 
% ---- Bibliography ----
%\begin{thebibliography}{8}
%the .bib file contains some basic references I used so far
\newpage

\bibliographystyle{splncs04}
\bibliography{references}

\end{document}